\newcommand{\myTitle}{Time-Shifted Alternating Gelfand-Pinsker\\ Coding for Broadcast Channels \xspace}
\pgfplotsset{compat=1.18}
\pgfplotsset{table/search path={results/},}
\definecolor{wongblue}{RGB}{0, 114, 178}
\definecolor{wongorange}{RGB}{230, 159, 0}
\definecolor{wonggreen}{RGB}{0, 158, 115}
\definecolor{wongpurple}{RGB}{204, 121, 167}
\definecolor{wonglightblue}{RGB}{86, 180, 233}
\definecolor{wongvermillion}{RGB}{213, 94, 0}
\definecolor{wongyellow}{RGB}{240, 228, 66}
\definecolor{matlabblue}{rgb}{     0,     0.447, 0.741}
\definecolor{matlaborange}{rgb}{   0.85,  0.325, 0.098}
\definecolor{matlabyellow}{rgb}{   0.929, 0.694, 0.125}
\definecolor{matlabpurple}{rgb}{   0.494, 0.184, 0.556}
\definecolor{matlabgreen}{rgb}{    0.466, 0.674, 0.188}
\definecolor{matlablightblue}{rgb}{0.301, 0.745, 0.933}
\definecolor{matlabred}{rgb}{      0.635, 0.078, 0.184}
\tikzset{%
	block/.style     = {draw,rectangle,align=center,inner sep=2mm},
	bigblock/.style  = {draw,rectangle,align=center,inner sep=2mm,minimum height=2.5em},
	hiergroup/.style = {draw,line width=0.3pt,inner sep=5mm,rectangle,rounded corners},
	dspxor/.style    = {shape=dspshapexor,line cap=rect,line join=rect,line width=\dspblocklinewidth,minimum size=\dspoperatordiameter},
}
\pgfplotsset{
	discard if symbolic not/.style 2 args={
		x filter/.code={
			\edef\tempa{\thisrow{#1}}
			\edef\tempb{#2}
			\ifx\tempa\tempb
			\else
				
			\fi
		}
	},
	discard if/.style 2 args={
		x filter/.append code={
				\ifdim\thisrow{#1} pt=#2 pt
						
				\fi
		}
	},
	discard if not/.style 2 args={
			x filter/.append code={
					\ifdim\thisrow{#1} pt=#2 pt
					\else
							
					\fi
			}
	}
}
\newcommand{\@pltref}[1]{\tikzexternaldisable\ref{#1}\tikzexternaldisable}
\newcommand{\@@pltref}[1]{(\tikzexternaldisable\ref{#1}\tikzexternaldisable)}
\newcommand{\pltref}{\@ifstar\@pltref\@@pltref}
\theoremstyle{plain}
\newtheorem{thm}{Theorem}
\theoremstyle{definition}
\theoremstyle{remark}
\newcommand*{\matr}[1]{\bm{#1}}
\newcommand{\bmat}[1]{\ensuremath{\begin{bmatrix}#1\end{bmatrix}}}
\DeclareMathOperator{\Ber}{Ber}
\DeclareMathOperator{\E}{\mathbb E}
\DeclareMathOperator{\He}{\mathbb H} %
\DeclareMathOperator{\MI}{\mathbb I}
\DeclarePairedDelimiter\abs{\lvert}{\rvert}
\DeclarePairedDelimiter\idxset{\llbracket}{\rrbracket}
\DeclarePairedDelimiter\set{\{}{\}}
\DeclarePairedDelimiterX{\infdivx}[2]{(}{)}{#1\;\delimsize\|\;#2}
\DeclareSIUnit{\belc}{Bc}
\DeclareSIUnit{\belm}{Bm}
\DeclareSIUnit{\bit}{bit}
\DeclareSIUnit{\sample}{S}
\DeclareSIUnit{\bpcu}{bpcu}
\newacronym{ASK}{ASK}{amplitude-shift keying}
\newacronym{AWGN}{AWGN}{additive white Gaussian noise}
\newacronym{DMBC}{DMBC}{discrete memoryless broadcast channel}
\newacronym{BER}{BER}{bit error rate}
\newacronym{BICM}{BICM}{bit-interleaved coded modulation}
\newacronym{biDMC}{biDMC}{binary-input discrete memoryless channel}
\newacronym{bpcu}{bpcu}{bits per channel use}
\newacronym{BPSK}{BPSK}{binary phase-shift keying}
\newacronym{BC}{BC}{broadcast channel}
\newacronym{CCDM}{CCDM}{constant composition distribution matching}
\newacronym{CRC}{CRC}{cyclic redundancy check}
\newacronym{diMC}{diMC}{discrete-input memoryless channel}
\newacronym{DMC}{DMC}{discrete memoryless channel}
\newacronym{DM}{DM}{distribution matching}
\newacronym{DPC}{DPC}{dirty paper coding}
\newacronym{dSNR}{dSNR}{design signal-to-noise ratio}
\newacronym{FEC}{FEC}{forward error control}
\newacronym{FER}{FER}{frame error rate}
\newacronym{GP}{GP}{Gelfand-Pinsker}
\newacronym{HY}{HY}{Honda-Yamamoto}
\newacronym{iff}{iff}{if and only if}
\newacronym{iid}{i.i.d.}{independent and identically distributed}
\newacronym{IM}{IM}{intensity modulation}
\newacronym{LDPC}{LDPC}{low-density parity-check}
\newacronym{LLPS}{LLPS}{linear layered probabilistic shaping}
\newacronym{LLR}{LLR}{log-likelihood ratio}
\newacronym{MAC}{MAC}{multi-access channel}
\newacronym{MC}{MC}{Monte Carlo}
\newacronym{MI}{MI}{mutual information}
\newacronym{MLC}{MLC}{multilevel coding}
\newacronym{MLHY}{MLHY}{multilevel Honda-Yamamoto}
\newacronym{MLPC}{MLPC}{multilevel polar coding}
\newacronym{MMSE}{MMSE}{minimum mean square error}
\newacronym{MSD}{MSD}{multistage decoding}
\newacronym{OOK}{OOK}{on-off keying}
\newacronym{PAM}{PAM}{pulse-amplitude modulation}
\newacronym{PAS}{PAS}{probabilistic amplitude shaping}
\newacronym{PS}{PS}{probabilistic shaping}
\newacronym{PCPAS}{PC-PAS}{polar-coded probabilistic amplitude shaping}
\newacronym{QAM}{QAM}{quadrature-amplitude modulation}
\newacronym{RCUB}{RCUB}{random coding union bound}
\newacronym{SCL}{SCL}{successive cancellation list}
\newacronym{SC}{SC}{successive cancellation}
\newacronym{SE}{SE}{spectral efficiency}
\newacronym{SIR}{SIR}{signal-to-interference ratio}
\newacronym{SMI}{SMI}{symmetric mutual information}
\newacronym{SNR}{SNR}{signal-to-noise ratio}
\newacronym{TCMPAS}{TCM-PAS}{trellis-coded modulation probabilistic amplitude shaping}
\newacronym{TSA}{TSA}{time-shifted alternating}
\newacronym{wlog}{w.l.o.g.}{without loss of generality}
\begin{document}

\title{\myTitle}

\author{%
  \IEEEauthorblockN{Constantin Runge\,\orcidlink{0000-0001-8324-3945} and Gerhard Kramer\,\orcidlink{0000-0002-3904-9181}}
  \IEEEauthorblockA{Institute for Communications Engineering, %
  Technical University of Munich, 80333 Munich, Germany\\
  \{constantin.runge, gerhard.kramer\}@tum.de}}
  
\maketitle

\begin{abstract}
A coding scheme for \glspl{BC} is proposed that shifts the users' code blocks by different amounts of time and applies alternating Gelfand-Pinsker encoding. The scheme achieves all rate tuples in Marton's region for two receiver \glspl{BC} without time-sharing or rate-splitting. Simulations with short polar codes show that the method reduces the gap to capacity as compared to time-sharing.
\end{abstract}
\glsresetall

\section{Introduction}

\Glspl{BC} model the downlink of wireless cellular systems. A practical approach to avoid interference is orthogonalizing transmission, e.g., by time-division multiplexing, frequency-division multiplexing, or inverse precoding. The best-known rates for \glspl{BC}, up to multi-letter coding that is considered impractical, are achieved with Marton coding which simultaneously bins two or more random codebooks; see~\cite[p.~259]{Csiszar81}, \cite{Kramer03}, \cite{Marton79}. However, Marton coding also seems impractical, and implementations use binning with individual codebooks, called \gls{GP} coding \cite{Gelfand79}.
Capacity-achieving \gls{GP} coding can be implemented by polar codes~\cite{Arikan09} for joint shaping and error control~\cite{Honda13}.
For example, the authors of~\cite{Korada10b} apply polar codes to binary symmetric \gls{GP} channels; the paper~\cite{Arikan12} polarizes two random variables concurrently; the paper \cite{Goela15} uses polar codes for probabilistic shaping and \gls{GP} coding; the authors of~\cite{Mondelli15} use a chaining construction; the thesis~\cite{Liu16} uses polar lattices; and the papers~\cite{Sener21,Sener22,Sener24} use scalar lattices and probabilistic shaping.

\gls{GP} coding for \glspl{BC} achieves the corner points of Marton's region. The other rate points may be achieved by time-sharing or rate-splitting~\cite{Carleial78}, similar to \glspl{MAC} with \gls{SC} decoding.
For \glspl{MAC}, a simple scheme has the transmitters use time-shifted encoding~\cite{Hou06} and \gls{SC} decoding. This approach has a lower delay than time-sharing in general and is simpler than rate-splitting. A related idea is block-offset encoding and \gls{SC} decoding, which can improve rates, e.g., for multi-access relay channels~\cite{Sankar06} and noisy network coding~\cite{Ya11,Lim11}.

This paper studies a dual of the \gls{MAC} scheme in~\cite{Hou06}, which we call \gls{TSA} \gls{GP} encoding. Simulations suggest that polar codes are well-suited for the approach. The paper is organized as follows. Section~\ref{sec:prelim} reviews notation, Marton's region, and \gls{GP} coding. Section~\ref{sec:interlaced} performs \gls{TSA}-\gls{GP} encoding using random codes, and Section~\ref{sec:polar} uses polar codes. Section~\ref{sec:conclusions} concludes the paper.

\section{Preliminaries}
\label{sec:prelim}
\subsection{Notation}
\label{subsec:notation}

Random variables are written in upper case, such as $X$.
The alphabet, distribution, and realization of $X$ are written as $\mathcal X$, $P_X$ and $x$, respectively.
If $P_{Y_1|X}$ is stochastically degraded with respect to $P_{Y_2|X}$ we write $P_{Y_1|X} \preceq P_{Y_2|X}$.
An index set is denoted by $\idxset{n} \coloneqq \set{1,\dots,n}$.
Strings $(x_1,\dots,x_n)$ of symbols are denoted as $x^n$.
String concatenation is denoted as $[a^m, b^n]$ and $g^n(x^n) := (g(x_1),\dots,g(x_n))$.

The expressions $\E[X]$, $\He(X)$, $\He(X|Y)$, and $\MI(X; Y)$ refer to the expectation of $X$, the entropy of $X$, the conditional entropy of $X$ given $Y$, and the \gls{MI} of $X$ and $Y$, respectively. The binary entropy function is denoted $h_2(p)=-p\log_2 p - (1-p)\log_2(1-p)$ for $0<p<1$, and $h_2(0)=h_2(1)=0$.
The conditional Bhattacharyya parameter is defined as (see~\cite{Arikan10})
\begin{equation}
    Z(X|Y) = 2\E\mleft[\sqrt{P_{X|Y}(0|Y)P_{X|Y}(1|Y)}\,\mright]
\end{equation}
and satisfies (see~\cite{Arikan10}, \cite[Lemma~6]{Liu19})
\begin{align}
    Z(X|Y)^2 & \le \He(X|Y) \le Z(X|Y) \\
    Z(X|Y,S) & \le Z(X|Y) \label{eq:bhattacharyya_conditional}
\end{align}
where $X,Y,S \sim P_{XYS}$.
The string $x^n$ is said to be $\epsilon$-typical with respect to $P_X$ if
\begin{align*}
    \left| N(a|x^n)/n - P_X(a) \right| \le \epsilon P_X(a), \quad \text{for all $a\in\mathcal X$}
\end{align*}
where $N(a|x^n)$ is the number of times the letter $a$ occurs in the string $x^n$.
The set of length-$n$ $\epsilon$-typical strings with respect to $P_X$ is denoted as $\mathcal T_\epsilon^n(P_X)$. We write $P_X^n=(P_X)^n$ for \gls{iid} strings.

\edef\thefigurebackup{\thefigure}
\setcounter{figure}{1}
\begin{figure*}[!t]
    \centering
	\scalebox{1.2}{\includegraphics{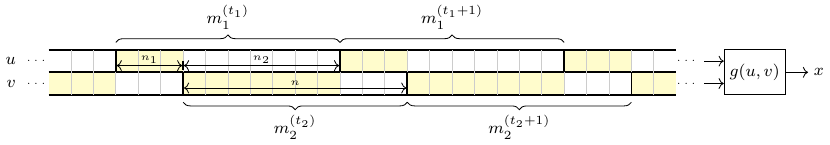}}
    \caption{\Gls{TSA}-\gls{GP} coding. The yellow region signifies \gls{GP} encoding, and the white region point-to-point encoding.}
    \label{fig:scheme}
    \hrulefill
\end{figure*}
\setcounter{figure}{\thefigurebackup}

\subsection{\texorpdfstring{\glspl{BC}}{}}

A two-receiver discrete memoryless \gls{BC} \cite{Cover72} has a conditional distribution $P_{Y_1Y_2|X}$ with input $X$ and two outputs $Y_1$, $Y_2$.
An $(n, R_1, R_2)$ code is a triple consisting of one encoder and two decoders where the encoder maps a message pair $(m_1,m_2) \in \idxset{2^{nR_1}}\times\idxset{2^{nR_2}}$ to a transmit string $x^n$ and each decoder $k\in\set{1,2}$ maps its channel observation $y_k^n \in \mathcal Y_k^n$ to an estimate $\hat m_k\in \idxset{2^{nR_k}}$.
Marton's region of rate pairs $(R_1,R_2)$ for a fixed $P_{UVX}$ is (see \cite{Marton79,ElGamal81} and also~\cite{Liang11})
\begin{gather}
    R_1 \le \MI(U; Y_1), \quad
    R_2 \le \MI(V; Y_2), \label{eq:Marton12} \\
    R_1 + R_2 \le \MI(U; Y_1) + \MI(V; Y_2) - \MI(U; V) \label{eq:Marton3}
\end{gather}
where $U$, $V$ are auxiliary random variables and the chain $(U,V)-X-(Y_1,Y_2)$ is Markov.
It suffices to choose $X=g(U,V)$ for some function $g : \mathcal U \times \mathcal V \to \mathcal X$.

\subsection{Achievability via \texorpdfstring{\Gls{GP}}{} Coding}

The interesting corner points of \eqref{eq:Marton12}-\eqref{eq:Marton3} are
\begin{align}
    (R_1,R_2) & = (\MI(U, Y_1), \MI(V; Y_2) - \MI(V; U)) \label{eq:corner1} \\
    (R_1,R_2) & = (\MI(U; Y_1) - \MI(U; V), \MI(V; Y_2) \textnormal{.}\label{eq:corner2}
\end{align}
Fig.~\ref{fig:blackwell_capacity} shows the capacity region of the Blackwell channel \cite{vanderMeulen75}, which can be achieved with \gls{GP} encoding; see~\cite{Marton79,Pinsker78}.
For the corner point \eqref{eq:corner1}, $m_1$ is encoded to $u^n$ with a usual encoder for the channel $P_{Y_1|U}$.
Next, $u^n$ is interpreted as side-information available at the encoder, and $m_2$ is encoded to $v^n$ using \gls{GP} coding for the channel $P_{Y_2|U,V}$ with state $U$.
Finally, the symbols $x_i = g(u_i,v_i)$, $i\in \idxset{n}$, are transmitted.
The corner point \eqref{eq:corner2} is achieved by swapping the roles of messages 1 and 2.

Time-sharing can achieve any point in Marton's region by transmitting at each corner point for some fraction of the time.
However, if the reliability constraints require similar code word lengths for both receivers, the delay is approximately a multiple of the code word lengths. This makes time-sharing impractical for low-delay transmissions.
On the other hand, rate-splitting splits one message into two messages and treats the \gls{BC} as a three-receiver channel, which incurs additional complexity for code design and at the encoder and decoders.

\begin{figure}
    \centering
	\scalebox{1.15}{\includegraphics{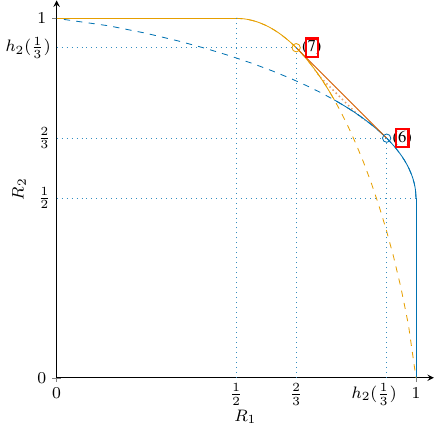}}
    \caption{Capacity region of the Blackwell channel \cite{vanderMeulen75}. The blue and yellow regions are achievable with conventional \gls{GP} coding, each corresponding to one encoding order.}
    \label{fig:blackwell_capacity}
\end{figure}

\section{\texorpdfstring{\Gls{TSA}}{} Encoding with Random Codes}
\label{sec:interlaced}

Time-sharing has each code word generated by a usual encoder or a \gls{GP} encoder.
\Gls{TSA} coding instead introduces an offset between the receiver blocks.
The encoders for each receiver perform \gls{GP} encoding in the first part of their blocks, where the interference of the other receiver's codeword is known, and a usual encoding in the second part; see Fig.~\ref{fig:scheme}.

\stepcounter{figure}

Again, two encoders produce $u^n$ and $v^n$.
Let $n_1 < n$ be the shift of the blocks of receiver 2.
Then, when $m_1$ is encoded, the last $n_1$ symbols of the previous $v^n$ overlap with the first $n_1$ symbols of the $u^n$ to be encoded.
The encoder for receiver 1 assumes the first $n_1$ states are known and treats the remaining $n - n_1$ states as unknown.
Similarly, the encoder for receiver 2 assumes the first $n - n_1$ states are known and treats the remaining $n_1$ states as unknown.

\begin{thm}
    \gls{TSA}-\gls{GP} encoding achieves all rate points in Marton's region.
\end{thm}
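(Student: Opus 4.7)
My plan is to fix $P_{UVX}$ with $X=g(U,V)$ and $\alpha = n_1/n \in [0,1]$, and show that \gls{TSA}-\gls{GP} coding achieves
\begin{align*}
    R_1 &= \MI(U;Y_1) - \alpha\,\MI(U;V), \\
    R_2 &= \MI(V;Y_2) - (1-\alpha)\,\MI(U;V).
\end{align*}
The sum $R_1+R_2$ saturates the dominant face constraint \eqref{eq:Marton3} at every $\alpha$, and as $\alpha$ runs over $[0,1]$ the pair $(R_1,R_2)$ sweeps the entire segment between the corners \eqref{eq:corner1} and \eqref{eq:corner2}. Points strictly below the dominant face of Marton's pentagon are then reached by sub-coding, and the union over $P_{UVX}$ gives all of Marton's region.

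For the achievability, I would consider $B$ blocks in steady state, prefixed by a dummy $v^{n_1}$ segment whose rate penalty vanishes as $B\to\infty$. Each $u^n$ is built as the concatenation of two independent random codewords: a length-$n_1$ \gls{GP} codeword for $P_{Y_1|X}$ with state given by the tail of the preceding $v$-block, and a length-$(n-n_1)$ standard point-to-point codeword drawn \gls{iid} $P_U$. The \gls{GP} step uses the usual double codebook of $2^{n_1 \MI(U;Y_1)}$ entries partitioned into $2^{n_1(\MI(U;Y_1)-\MI(U;V))}$ bins with joint-typicality covering, delivering rate $\MI(U;Y_1)-\MI(U;V)-\epsilon$; the standard step delivers rate $\MI(U;Y_1)-\epsilon$. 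Each $v^n$ block is built symmetrically with $n_1$ and $n-n_1$ exchanged, using a segment of the current $u^n$ as state. Counting the bits sent per period of length $n$ yields the claimed $R_1$ and $R_2$.

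Error control and distribution control are intertwined. At each \gls{GP} step the covering lemma finds a jointly typical codeword with probability tending to one, so the realized (codeword, state) pair has joint statistics approaching $P_{UV}^\ell$ for the relevant length $\ell$. At each standard step the codewords are drawn from the target marginal, so the adjacent \gls{GP} encoder sees an approximately \gls{iid} state with the intended law, and the standard segment's effective channel has the correct marginal $P_{Y_k|U}$ or $P_{Y_k|V}$. Each decoder runs joint-typicality decoding over its own length-$n$ block against the two sub-codebooks; a union bound over the $O(B)$ sub-codewords keeps the overall error vanishing. The main obstacle is making the chaining rigorous: one must verify inductively that the standard-coded tail of every block supplies an effectively \gls{iid} $P_V$ or $P_U$ state to the neighboring \gls{GP} encoder, and that the induced joint statistics on the overlapping positions stay close to $P_{UV}$, so that both the \gls{GP} rate formula and the single-user rate formula apply unchanged throughout the chain.
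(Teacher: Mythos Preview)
Your argument is correct and reaches the same rate pair $(\MI(U;Y_1)-\alpha\MI(U;V),\,\MI(V;Y_2)-(1-\alpha)\MI(U;V))$, but the codebook structure you use differs from the paper's in a way worth noting.

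You split each user's message into two independent sub-messages, one carried by the length-$n_1$ \gls{GP} sub-codeword and one by the length-$n_2$ point-to-point sub-codeword, and then add the two sub-rates. The paper instead uses a \emph{single} message $m_1\in\idxset{2^{nR_1}}$ that indexes \emph{both} halves: the first-half codebook is $u^{n_1}(m_1,l_1)$ of size $2^{nR_1+n_1R_1'}$ and the second-half codebook is $u^{n_2}(m_1)$ of size $2^{nR_1}$, and the decoder searches for a single $\hat m_1$ (together with $\hat l_1$) making both halves jointly typical with $y_1^{n_1}$ and $y_1^{n_2}$ simultaneously. Because the two half-codewords are generated independently, the two typicality events factor and the union bound over $2^{nR_1+n_1R_1'}$ candidates gives $R_1+\alpha R_1'<\MI(U;Y_1)$ directly; setting $R_1'\approx\MI(U;V)$ yields the target rate. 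Your modular route buys a cleaner reduction to off-the-shelf \gls{GP} and point-to-point theorems, and you are more explicit than the paper about the $B$-block chaining and the dummy prefix. The paper's joint-indexing route buys a single rate $R_1$ per block with no sub-message bookkeeping, which is exactly the practical advantage over rate-splitting that the paper wants to claim; your construction, while correct, quietly reintroduces a per-segment message split.

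The ``chaining'' concern you flag is lighter than you suggest: the state fed to each \gls{GP} encoder is always the \emph{standard-coded} half of the neighbouring block, which is drawn \gls{iid} from $P_V$ (resp.\ $P_U$) by construction, so no inductive approximation of the state law is needed; only the joint $(U,V)$ statistics on the \gls{GP}-coded half require the covering lemma, and that is a one-step argument per block.
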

\begin{proof}
For the \gls{BC} $P_{Y_1Y_2|X}$ fix the distribution $P_{UV}$, a function $g: \mathcal U \times \mathcal V \to \mathcal X$, as well as overlap lengths $0 \leq n_1 \leq n$ and $n_2 = n - n_1$.
We transmit messages $m_1 \in \idxset{2^{nR_1}}$, $m_2 \in \idxset{2^{nR_2}}$ to receivers 1 and 2, respectively.
For $n_1 = 0$ or $n_2 = 0$, \gls{TSA} coding is identical to \gls{GP} coding for the corner points.
We assume $n_1,n_2 > 0$ in the following.

\paragraph*{Code Construction}
For receiver 1, choose $2^{nR_1 + n_1R'_1}$ \gls{iid} code words $u^{n_1}(m_1,l_1)$, $l_1\in\idxset{2^{n_1R'_1}}$, with $P_U^{n_1}$ and $2^{nR_1}$ \gls{iid} $u^{n_2}(m_1)$ with $P_U^{n_2}$.
For receiver 2, choose $2^{nR_2 + n_2R'_2}$ \gls{iid} $v^{n_2}(m_2,l_2)$, $l_2\in\idxset{2^{n_2R'_2}}$, with $P_V^{n_2}$ and $2^{nR_2}$ \gls{iid} $v^{n_1}(m_2)$ with $P_V^{n_1}$.

\paragraph*{Encoding}
The encoder maps $m_1$, $m_2$ to pairs $(u^{n_1}, u^{n_2})$, $(v^{n_2}, v^{n_1})$, respectively, in an alternating manner.
Given a previous $m_2$ and a current $m_1$, the encoder chooses an $l_1$ such that $(u^{n_1}(m_1, l_1), v^{n_1}(m_2)) \in \mathcal T_\epsilon^{n_1}(P_{UV})$ and transmits $x^{n_1} = g^{n_1}(u^{n_1}(m_1, l_1), v^{n_1}(m_2))$.
Simliarly, given a previous $m_1$ and a current $m_2$, the roles are reversed, i.e., the encoder chooses a $l_2$ such that $(u^{n_2}(m_1), v^{n_2}(m_2, l_2)) \in \mathcal T_\epsilon^{n_2}(P_{UV})$ and transmits $x^{n_2} = g^{n_2}(u^{n_2}(m_1), v^{n_2}(m_2, l_2))$.
In both cases, if there is no jointly typical string, set $l_k = 1$, $k=1,2$.

\paragraph*{Decoding}
Given $y_1^n = [y_1^{n_1}, y_1^{n_2}]$, receiver 1 finds indices $\hat m_1$, $\hat l_1$ such that $(u^{n_1}(\hat m_1, \hat l_1), y_1^{n_1}) \in \mathcal T_\epsilon^{n_1}(P_{UY})$ and $(u^{n_2}(\hat m_1), y_1^{n_2}) \in \mathcal T_\epsilon^{n_2}(P_{UY})$.
Receiver 2 proceeds analogously with $y_2^n = [y_2^{n_2}, y_2^{n_1}]$.

\paragraph*{Analysis}
The encoder uses standard \gls{GP} encoders, which are likely to succeed if $R_k' > \MI(U;V)$ and $n_k$ is large for $k=1,2$.
Suppose $(u^{n_1}, v^{n_1}, y_1^{n_1}) \in \mathcal T_\epsilon^{n_1}(P_{UVY_1})$ and consider the event $\mathcal E_1$ of finding wrong estimates $\hat m_1$, $\hat l_1$.
We have
\begin{align}
    \Pr(\mathcal E_1) &=    \sum_{\substack{\hat m, \hat l \\ \hat m \neq m_1}} \Pr(\mathcal E^1_\mathrm{GP}(\hat m, \hat l)) \Pr(\mathcal E^1_\mathrm{direct}(\hat m)) \\
    &\leq \sum_{\substack{\hat m, \hat l \\ \hat m \neq m_1}} 2^{-(n_1 \MI(U; Y_1) + n_2 \MI(U; Y_1)) + \delta_1(\epsilon)} \\
    &\leq 2^{n(R_1 + \frac{n_1}{n}R'_1 - \MI(U; Y_1)) + \delta_1(\epsilon)}
\end{align}
where the events $\mathcal E^1_\mathrm{GP}(m, l) = \set{(u^{n_1}(m, l), y_1^{n_1}) \in \mathcal T_\epsilon^{n_1}(P_{UY_1})}$ and $\mathcal E^1_\mathrm{direct}(m) = \set{(u^{n_2}(m), y_1^{n_2}) \in \mathcal T_\epsilon^{n_2}(P_{UY_1}}$ are independent, and $\lim_{\epsilon\to0}\delta_1(\epsilon) = 0$.
For the event $\mathcal E_2$ of finding erroneous indices at receiver 2, we similarly have
\begin{equation}
    \Pr(\mathcal E_2) \leq 2^{n(R_2 + \frac{n_2}{n}R'_2 - \MI(V; Y_2)) + \delta_2(\epsilon)} \textnormal{.}
\end{equation}
Defining $\alpha \coloneqq \frac{n_1}{n}$ and combining our results we obtain an achievable region of
\begin{align}
    R_1 &< \MI(U; Y_1) - \alpha\MI(U;V) \\
    R_2 &< \MI(V; Y_2) - (1-\alpha)\MI(U;V)
\end{align}
with $R_1 + R_2 < \MI(U; Y_1) + \MI(V; Y_2) - \MI(U; V)$.
\end{proof}

The above scheme requires no time-sharing or rate-splitting, and few blocks are needed to approach a desired rate tuple. For example, consider the target fraction $\alpha^* = a/N$ where $a$ and $N$ are integers.
With time-sharing, the average rates are close to the desired rate tuple only if the number of blocks is a multiple of $N$ or significantly larger than $N$.
Thus, the number of data bits for receiver $k$ with time-sharing is a multiple of $N \cdot n R_k$.
On the other hand, \gls{TSA} coding requires $N$ to be a divisor of the block length $n$, i.e., the number of data bits is a multiple of $n R_k$, without the extra factor $N$.
This difference is important for large $N$, e.g., when a fine-grained control of data rates is needed.
If $N$ does not divide $n$, then $\alpha^*$ can be closely approximated by some fraction $\frac{n_1}{n}$ for most practical block lengths.
Also, time-sharing and rate-splitting require multiple rates for different blocks, while the \gls{TSA} rates are the same for each block which simplifies coding.

\section{\texorpdfstring{\Gls{TSA}}{} Encoding With Polar Codes}
\label{sec:polar}

We demonstrate the practicality of \gls{TSA} coding via polar codes. Polar codes achieve the corner points of Marton's region using the schemes from \cite{Goela15,Mondelli15}. \gls{TSA} polar encoding modifies these schemes. %

\subsection{Polar Codes}

Polar codes are linear block codes defined via the self-inverse, linear polar transform $\matr G_n$ with
\begin{equation}
    \bar x^n = x^n \matr G_n^{-1} \textnormal{,\quad } \matr G_n = \bmat{1 & 0 \\ 1 & 1}^{\otimes \log_2 n} \label{eq:polar_transform} %
\end{equation}
where $\matr F^{\otimes k}$ is the $k$-fold Kronecker product of $\matr F$.
For strings $X^n$ and $Y^n$ we introduce the notation
\begin{align}
    \mathcal L_{X|Y}  &\triangleq \set{i\in\idxset n \mid Z(\bar X_i|\bar X^{i-1}, Y^n) < \delta_n} \label{eq:strongly_low} \\
    \mathcal H_{X|Y}  &\triangleq \set{i\in\idxset n \mid Z(\bar X_i|\bar X^{i-1}, Y^n) > 1-\delta_n}
\end{align}
for sets of strongly polarized positions and
\begin{align}
    \mathcal L_{X|Y}' &\triangleq \set{i\in\idxset n \mid Z(\bar X_i|\bar X^{i-1}, Y^n) \leq 1-\delta_n} \\
    \mathcal H_{X|Y}' &\triangleq \set{i\in\idxset n \mid Z(\bar X_i|\bar X^{i-1}, Y^n) \geq \delta_n} \label{eq:weakly_high}
\end{align}
for sets of strongly or weakly polarized positions
where $\bar X^n = X^n \matr G_n^{-1}$ is the polar transform of $X^n$ and $\delta_n = 2^{-n^\beta}$ for $0 < \beta < \frac12$.
These index sets polarize~\cite[Eqs.~(38), (39)]{Honda13}, i.e., we have
\begin{align}
    \lim_{n\to\infty} \frac1n\abs*{\mathcal L_{X|Y}} &= \lim_{n\to\infty} \frac1n\abs*{\mathcal L'_{X|Y}} = 1 - \He(X|Y) \label{eq:asym_polarization1} \\
    \lim_{n\to\infty} \frac1n\abs*{\mathcal H_{X|Y}} &= \lim_{n\to\infty} \frac1n\abs*{\mathcal H'_{X|Y}} = \He(X|Y)     \textnormal{.} \label{eq:asym_polarization2}
\end{align}
One usually has $X^n \sim \Ber^n(\frac12)$ for polar codes.

This polarization property can be used for coding by considering the polar transform $\bar x^n$ of the code word $x^n$ as follows. The bits $\bar x_i$ on positions $i \in \mathcal L_{X|Y}$, can be estimated reliably given previous $\bar x_j$, $j \in \idxset{i-1}$.
An encoder thus places data into the positions in $\mathcal L_{X|Y}$.
The remaining, so-called frozen, bits $\bar x_i$, $i \in \mathcal H'_{X|Y}$, are fixed to $0$.
An \gls{SC} decoder successively computes estimates of $\bar x_i$ given $y^n$: at every position $i$, either $\bar x_i = 0$ is known, or $\bar x_i$ can be estimated reliably.
The error probability of the entire decoding procedure goes to $0$ as $\mathcal O(2^{-n^\beta})$ for $0 < \beta < \frac12$.

\subsection{Polar Codes for Probabilistic Shaping and \texorpdfstring{\Gls{GP}}{} Coding}

The polar transform can be used to construct codes that emulate a distribution $P_{X^n}=\prod_i P_{X_i}$ \cite{Honda13}.
Consider, e.g., the \gls{iid} case $P_{X^nY^n} = P_X^nP_{Y|X}^n$.
We analyze $Z(\bar X_i | \bar X^{i-1}, Y^n)$ and $Z(\bar X_i | \bar X^{i-1})$.
The bits $\bar x_i$ with $i\in\mathcal H_X$ have entropy close to $1$ and can carry one bit of information each.
The shaping bits $\bar x_i$ with $i\in\mathcal L'_X$, have entropy mostly significantly less than $1$.
This means $X^n \sim P_{X^n}$ induces a non-uniform distribution on these bits given $\bar x^{i-1}$.
Encoding is performed by first fixing $\bar x_i = 0$, $i\in\mathcal H'_{X|Y}$, and placing data into $\bar x_i$, $i \in \mathcal M$, with $\mathcal M = \mathcal L_{X|Y} \cap \mathcal H_X$.
An \gls{SC} decoder then computes probabilities and samples $\bar x_i \sim P_{\bar X_i|\bar X^{i-1}}$, $i\in\mathcal L'_X$.
The receiver is similar to the one for polar codes with uniform $X$.
The bits with $i \in \mathcal H_X$ are estimated using an \gls{SC} decoder given $y^n$.
The bits at $\mathcal L'_X$ must be decoded with the same \gls{SC} decoder and randomness as at the encoder.
Since $\mathcal L_X \subseteq \mathcal L_{X|Y}$, the rate is \cite{Honda13}
\begin{equation}
    \lim_{n\to\infty}\frac1n\abs{\mathcal M} = \He(X) - \He(X|Y)
\end{equation}
and the decoding error probability scales as $\mathcal O(2^{-n^\beta})$.

The above carries over to \gls{GP} coding with polar codes when a state $S^n$ is known at the encoder.
Assuming $P_{S|X} \preceq P_{Y|X}$, one can show that $\mathcal L_{X|S} \subseteq \mathcal L_{X|Y}$ \cite{Korada10b}.
The index sets $\mathcal H_{X|S}$ and $\mathcal L_{X|Y}$ thus align, and the coding scheme for probabilistic shaping can be applied by simply considering $X|S$ and $X|Y$ instead of $X$ and $X|Y$ \cite{Korada10b,Honda13,Goela15}.
This degradedness assumption can be lifted using the chaining methods of \cite{Korada10b, Mondelli15}.
An error probability scaling of $\mathcal O(2^{-n^\beta})$ is achieved with the message set $\mathcal M = \mathcal H_{X|S} \cap \mathcal L_{X|Y}$ and the rate
\begin{equation}
    \lim_{n\to\infty}\frac1n\abs{\mathcal M} = \He(X|S) - \He(X|Y).
\end{equation}
For simplicity, we assume $P_{S|X} \preceq P_{Y|X}$.

\subsection{Proof of Polarization for \texorpdfstring{\Gls{TSA}}{} Coding}

Polar codes achieve the corner points of Marton's region if $P_{U|V} \preceq P_{Y_2|V}$ \cite[Thm.~3]{Goela15}.
The scheme splits the target distribution $P_{UV}$ into $P_U P_{V|U}$ and performs shaping with frozen bits $\mathcal H'_{U|Y_1}$ and shaping bits $\mathcal L'_U$ for receiver 1, and then polar \gls{GP} coding with frozen bits $\mathcal H'_{V|Y_2}$ and shaping bits $\mathcal L'_{V|U}$ for receiver 2.
As outlined above, this scheme achieves the \gls{GP} points with the message sets $\mathcal M_1 = \mathcal L_{U|Y_1} \cap \mathcal H_U$ and $\mathcal M_2 = \mathcal L_{V|Y_2} \cap \mathcal H_{V|U}$, i.e., we have
\begin{align}
    \lim_{n\to\infty}\frac1n\abs{\mathcal M_1} &= \MI(U; Y_1) \\
    \lim_{n\to\infty}\frac1n\abs{\mathcal M_2} &= \MI(V; Y_2) - \MI(U; V) \textnormal{.}
\end{align}
Using this polarization property, one can prove vanishing error probabilities of a coding scheme with shared randomness.

One hurdle to proving polarization for the \gls{TSA}-\gls{GP} scheme is that the strings of \glspl{biDMC} to be transformed are not stationary and depend on the block length $n$.
We cannot identify the transformed channels after the same number of transformations for different $n$ with each other.
Instead, we prove polarization for an ensemble of polar codes concatenated with a uniform random block interleaver of length $n$.
This interleaver allows to treat the non-stationary sequence as stationary and \gls{iid} when considering its polar transform.
We can model the interleaver as randomly assigning \gls{GP} encoding or point-to-point encoding to each position $i\in\idxset{n}$.

\begin{thm}
    Consider a \gls{biDMC} $P_{Y|UV}$ with channel input $U$ and random state $V$.
    For each block of length $n$, the partial state $v^{n_1}$ is known to the transmitter at positions $\mathcal A \subset \idxset n$, $\abs{\mathcal A} = n_1$.
    If the positions $\mathcal A$ are chosen uniformly at random for each code block,
    then there exist polar encoders and decoders with rate approaching $\MI(U;Y) - \alpha\MI(U;V)$ and error probability approaching $0$ as $\mathcal O(2^{-n^\beta})$, $0 < \beta < \frac12$.
\end{thm}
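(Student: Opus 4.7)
The plan is to reduce this non-stationary polar GP problem to a stationary iid polar GP problem by augmenting the model with a per-position indicator random variable that records whether the state is visible, and then invoking the standard polar shaping/GP construction recalled in Section~\ref{sec:polar}.

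First, I would replace the uniform random subset $\mathcal A$ of exact size $n_1$ by an iid Bernoulli$(\alpha)$ indicator sequence $A^n$, and argue that the two ensembles are close enough that the polarization statements \eqref{eq:asym_polarization1}-\eqref{eq:asym_polarization2} and the Bhattacharyya inequalities in \eqref{eq:strongly_low}-\eqref{eq:weakly_high} carry over; the uniform random block interleaver outlined just before the theorem is what enables this replacement. Next, I would define the effective state $S_i = V_i$ when $A_i = 1$ and $S_i = \ast$ (a symbol outside $\mathcal V$) when $A_i = 0$. Since $(U_i, V_i, Y_i, A_i)$ are iid in $i$ under the Bernoulli model, so are $(U_i, S_i, Y_i)$, and $S^n$ is precisely the encoder's side information. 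We then face an ordinary polar GP problem with input $U$, iid state $S$, and output $Y$, under the simplifying degradedness hypothesis $P_{S|U} \preceq P_{Y|U}$, with the chaining of \cite{Mondelli15} removing the hypothesis if needed.

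Applying the construction at the end of Section~\ref{sec:polar} with message set $\mathcal M = \mathcal H_{U|S} \cap \mathcal L_{U|Y}$ and frozen positions $\mathcal H'_{U|Y}$, the polarization relations \eqref{eq:asym_polarization1}-\eqref{eq:asym_polarization2} yield
\begin{equation*}
    \lim_{n\to\infty} \tfrac1n \abs{\mathcal M} = \He(U|S) - \He(U|Y),
\end{equation*}
and the usual SC error analysis gives decoding error $\mathcal O(2^{-n^\beta})$. A direct computation gives $\He(U|S) = \alpha \He(U|V) + (1-\alpha)\He(U)$, so that $\He(U|S) - \He(U|Y) = \MI(U;Y) - \alpha\MI(U;V)$, which is the announced rate.

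The principal obstacle is the first step: polarization-theoretic bounds rely on iid coordinates after the polar transform, whereas the physical side-information mask has exchangeable but dependent coordinates (fixed Hamming weight $n_1$). Showing that the uniform random interleaver decouples positions enough for the Bernoulli proxy to govern the limiting fractions in \eqref{eq:asym_polarization1}-\eqref{eq:asym_polarization2} and to perturb the Bhattacharyya parameters by $o(1)$ is the technical heart of the argument, and it is what legitimizes the reduction to a standard iid GP polar coding problem.
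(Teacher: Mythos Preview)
Your reduction is exactly the paper's: introduce a per-position Bernoulli$(\alpha)$ indicator, form the augmented side information (your $S$, the paper's $V'$) that equals $V$ when the indicator is $1$ and a dummy symbol otherwise, apply the standard polar shaping/GP construction with $\mathcal M = \mathcal H_{U|V'}\cap\mathcal L_{U|Y}$, and compute $\He(U|V')=\alpha\He(U|V)+(1-\alpha)\He(U)$ to obtain the rate $\MI(U;Y)-\alpha\MI(U;V)$, with the same degradedness caveat and chaining fallback.

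Where you diverge is precisely the step you flag as the ``principal obstacle''. You propose to argue that the exchangeable fixed-weight mask is close enough to the iid $\Ber(\alpha)$ mask that the limiting fractions in \eqref{eq:asym_polarization1}--\eqref{eq:asym_polarization2} and the Bhattacharyya parameters differ only by $o(1)$. That route is plausible but heavy, and it is not what the paper does. The paper never perturbs Bhattacharyya parameters or re-proves polarization for the fixed-type model. Instead it builds the code once for the iid $\Ber(\alpha)$ indicator, obtaining an error probability $\Pr(\mathcal E_{\mathrm{iid}})=\mathcal O(2^{-n^\beta})$, and then observes that this iid error is a mixture over types of the conditional errors $\Pr(\mathcal E_p)$. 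Since every length-$n$ binary type, and in particular the target type $p^*=(n_1,n_2)$, has probability at least $1/n$ under $\Ber(\alpha)^n$ (standard type-counting, \cite{Csiszar98}), one gets
\[
\Pr(\mathcal E_{p^*}) \;\le\; n\,\Pr(\mathcal E_{\mathrm{iid}}) \;=\; \mathcal O\bigl(n\,2^{-n^\beta}\bigr) \;=\; \mathcal O\bigl(2^{-n^\beta}\bigr).
\]
This single inequality is the entire bridge from iid to fixed type; the polynomial factor is absorbed by the stretched-exponential error and no approximation of polarized sets is needed. Your plan would work in principle, but the type-conditioning argument is both simpler and what the paper actually uses.
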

\begin{proof}
    Consider $V,S,U,Y \sim P_V P_S P_{U|VS} P_{Y|UV}$ where $S$ takes on the value 1 if  $i\in\mathcal A$ and 0 otherwise.
    Let $P_S = \Ber(\alpha)$, $P_{U|VS}(u|v, 0) = \E\mleft[P_{U|V}(u|V)\mright] = P_U(u)$ and $P_{U|VS}(u|v,1) = P_{U|V}(u|v)$.
    To show polarization, define a random variable $V'$ with alphabet $\mathcal V' = \mathcal V \cup \set{\bot}$ where $\bot\not\in\mathcal V$ as $v': \mathcal V \times \mathcal S \to \mathcal V', (v, 1) \mapsto v, (v, 0) \mapsto \bot$. %
    The models $P_V P_S P_{U|VS} P_{Y|UV}$ and $P_V P_S P_{V'|VS} P_{U|V'} P_{Y|UV}$ yield identical joint distributions for $V,S,U,Y$ when $P_{U|V'}(u|v) = P_{U|V}(u|v)$, $v\in\mathcal V$, and $P_{U|V'}(u|\bot) = P_U(u)$.
    We have
    \begin{align}
        \lim_{n\to\infty}\abs{\mathcal L_{U|Y}}  &= 1 - \He(U|Y) \\
        \lim_{n\to\infty}\abs{\mathcal H_{U|V'}} &= \alpha\He(U|V) + (1-\alpha)\He(U)
    \end{align}
    and thus 
    \begin{align}
        \frac1n\abs{\mathcal M} &\to (1-\alpha)\MI(U;Y) + \alpha\left(\MI(U;Y) - \MI(U;V)\right) \\
                                &= \MI(U;Y) - \alpha\MI(U;V)
    \end{align}
    with $\mathcal M = \mathcal L_{U|Y} \cap \mathcal H_{U|V'}$.
    This last step requires $P_{V'|U} \preceq P_{Y|U}$ to ensure aligned polarization indices and $\mathcal L_{U|V'} \subseteq \mathcal L_{U|Y}$.
    If this assumption does not hold, one can use a chaining construction.
    The randomized encoders and decoders from \cite{Korada10b,Honda13,Goela15} prove the existence of an encoder and a decoder with message set $\mathcal M$ for which the error probability $\Pr(\mathcal E_\mathrm{iid})$ scales as $\mathcal O(2^{-n^\beta})$, $0 < \beta < \frac12$, when communicating over the channel $P_{Y|UV}$ with \gls{iid} side information $V'$.
    
    It remains to show this scheme is also viable when $S$ is not \gls{iid} $\Ber(\frac{n_1}{n})$ but of fixed type $p^* = (n_1, n_2)$.
    Let $\Pr(\mathcal E_p)$ be the error probability conditioned on $s^n$ being of type $p$, $\mathcal P_n$ be the set of all types of length $n$ on the binary alphabet, and $p_{s^n}$ be the type of $s^n$.
    The bounds \cite[Lemmas~II.1, II.2]{Csiszar98} yield $\Pr(p_{S^n} = p^*) \geq \frac1n$ and thus 
    \begin{equation}
        \Pr(\mathcal E_{p^*}) \leq n\sum_{p \in \mathcal P_n} \Pr(p_{S^n} = p) \Pr(\mathcal E_p) = n\Pr(\mathcal E_\mathrm{iid}) \textnormal{.}
    \end{equation}
    Therefore, the error probability when using the polar coding scheme constructed for \gls{iid} $S$ over the channel with partial side information of length $n_1$ and a random interleaver goes to $0$ as $\mathcal O(2^{-n^\beta})$, proving there exists an interleaver at least as good as this average.
\end{proof}

The above proves the existence of \gls{TSA} encoders and decoders, and positions at which the partial side information is available, such that the error probability can approach zero and the rate approach $\MI(U;Y_1) - \alpha\MI(U;V)$.
We obtain the same result for receiver 2 with rate approaching $\MI(V;Y_2) - (1-\alpha)\MI(U;V)$.
That is, following \cite{Chou15}, the randomized encoders and decoders can be replaced by deterministic ones, which leaves the interleavers as random.
Empirical results show the polarization speed depends on the interleaver.
Using no interleaver works well with the natural decoding order on $\bar x^n$ as defined by \eqref{eq:polar_transform}-\eqref{eq:weakly_high}.
We next demonstrate that \gls{TSA} polar coding without an interleaver can perform better than time-sharing when using this natural decoding order.

\subsection{Numerical Results over the Blackwell Channel}

Consider the Blackwell channel~\cite{vanderMeulen75}, which is a noiseless \gls{BC} with a ternary input and two binary outputs.
The channel has the mapping $\mathcal X \to \mathcal Y_1 \times \mathcal Y_2, 0 \mapsto (0, 0), 1 \mapsto (0, 1), 2 \mapsto (1, 0)$.
Being noiseless, its capacity region is known~\cite{Marton79,Pinsker78}.
The sum rate-optimal input distribution is uniform on $\mathcal X$ which implies $Y_1 = U \sim \Ber(\frac13)$, $Y_2 = V$, $V|(U=0) \sim \Ber(\frac12)$, and $V|(U=1) \sim \Ber(0)$.
The corner rates are $\SI[parse-numbers=false]{2/3}{\bpcu}$ for the \gls{GP} coded user and $\approx \SI{0.918}{\bpcu}$ for the marginally coded user with sum-rate $\log_2 3 \approx \SI{1.585}{\bpcu}$.
To compare \gls{TSA} polar coding with time-sharing, we choose $\alpha = \frac12$.

Fig.~\ref{fig:fer_blackwell} compares the total error probability of \gls{TSA} and time-shared polar coding under \gls{SCL} decoding \cite{Tal15} with list size $L=32$ and block length $n=2048$.
The codes are constructed by computing the entropies $\He(\bar U_i|\bar U^{i-1}, V'^n)$ for each user and using all bit channels with entropy above a specified threshold $1-\delta$ for data.
The order of the polarization stages, and thus the decoding order of $\bar x^n$, is chosen to apply the largest butterflies, i.e., the $(i, i + \frac{n}{2})$ butterflies, to $x^n$ first.
Using no interleaver and this decoding order, \gls{TSA} encoding \pltref{plt:fer_time_shifted} operates closer to the sum capacity than time sharing for the same error probability.
We compare with time-sharing and $n=1024$ \pltref{plt:fer_time_shared_short}, i.e., individual blocks have length $n=1024$, and time sharing with $n=2048$ \pltref{plt:fer_time_shared}.

\begin{figure}
    \centering
	\includegraphics{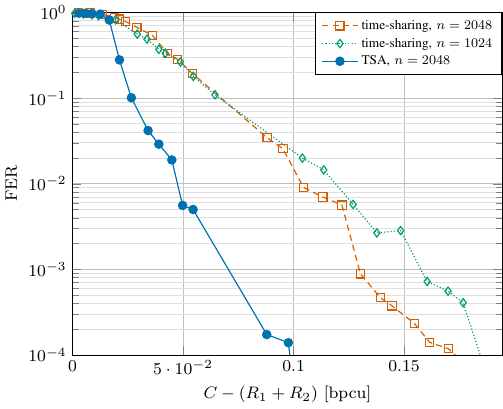}
    \caption{Simulated decoder error probabilities vs. back-off from sum capacity $C=\log_2 3$ bpcu over the Blackwell channel using SCL-32 decoding. The time-sharing fraction is $\alpha=1/2$ so $R_1=R_2$.}
    \label{fig:fer_blackwell}
\end{figure}

The improved performance is because the transformed channels of a $2\times2$ butterfly are at least as polarized as the initial channels.
That is, the first transformed channel is at least as unreliable as the less reliable of the initial two channels, and the second transformed channel is at least as reliable as the more reliable of the initial two channels \cite{Kim15}.
The degree of polarization of the transformed bit channels can be improved by pairing two channels of different reliability.
This is a useful heuristic to match parallel channels with different reliability to polar code symbols, c.f. \cite{Wiegart19a}.
By using a decoding order such that the indices $i$ and $i+\frac{n}{2}$ form a butterfly, time-shifting with $\alpha = \frac12$ and using the identity interleaver causes every butterfly in the first polarization stage to see two different channels.
For time-sharing, every butterfly in the first stage sees two identical channels.
This improves the degree to which the bit channels are polarized under time-shifting compared to time-sharing.

\section{Conclusions}
\label{sec:conclusions}

We presented \gls{TSA}-\gls{GP} encoding as a new coding technique for \glspl{BC}. The scheme can approach rate tuples between the Marton region corner points without time-sharing or rate-splitting.
The scheme increases the rates compared to time-sharing for short block lengths and has lower complexity than rate-splitting.
\gls{TSA} encoding can be implemented with polar codes, and numerical results confirm the anticipated rate gains. Further research may demonstrate the effectiveness of \gls{TSA} encoding over Gaussian \glspl{BC} and may consider tailored code designs to increase the gains.

\section*{Acknowledgement}

The authors thank Diego Lentner for valuable comments.
This work was supported by the Federal Ministry of Education and Research of Germany in the program of “Souverän. Digital. Vernetzt.”, joint project 6G-life, project identification number 16KISK002, and by
the German Research Foundation (DFG) under Project KR 3517/13-1.

\bibliographystyle{IEEEtran}
\bibliography{IEEEabrv,confs-jrnls,manual}

\end{document}